\DeclareSIUnit{\belmilliwatt}{Bm}
\DeclareSIUnit{\dBm}{\deci\belmilliwatt}
\def\BibTeX{{\rm B\kern-.05em{\sc i\kern-.025em b}\kern-.08em
		T\kern-.1667em\lower.7ex\hbox{E}\kern-.125emX}}
\newif\iftag@here
\newcommand*{\taghere}[1][0pt]
{\ifmeasuring@\else
	\global\tag@heretrue
	\tikz[remember picture,overlay]{\coordinate (taghere) at (0pt,#1);}%
	\fi}
\def\place@tag{%
	\iftagsleft@
	\kern-\tagshift@
	\iftag@here
	\global\tag@herefalse
	\tikz[remember picture,overlay]%
	{\path (taghere) -| node[anchor=base]{\rlap{\boxz@}} (0pt,0pt);}%
	\else
	\if1\shift@tag\row@\relax
	\rlap{\vbox{%
			\normalbaselines
			\boxz@
			\vbox to\lineht@{}%
			\raise@tag
	}}%
	\else
	\rlap{\boxz@}%
	\fi
	\kern\displaywidth@
	\fi
	\else
	\kern-\tagshift@
	\iftag@here
	\global\tag@herefalse
	\tikz[remember picture,overlay]%
	{\path  (taghere) -|  node[anchor=base]{\llap{\boxz@}} (0pt,0pt);}%
	\else
	\if1\shift@tag\row@\relax
	\llap{\vtop{%
			\raise@tag
			\normalbaselines
			\setbox\@ne\null
			\dp\@ne\lineht@
			\box\@ne
			\boxz@
	}}%
	\else \llap{\boxz@}%
	\fi
	\fi
	\fi
}
\DeclareMathOperator*{\argmax}{arg\,max}
\DeclareMathOperator*{\maximize}{maximize}
\DeclareMathOperator*{\argmin}{arg\,min}
\DeclareMathOperator*{\subjectto}{subject\,to}
\newacronym{swipt}{SWIPT}{simultaneous wireless information and power transfer}
\newacronym{wpt}{WPT}{wireless power transfer}
\newacronym{wit}{WIT}{wireless information transfer}
\newacronym{iot}{IoT}{Internet-of-Things}
\newacronym{awgn}{AWGN}{additive white Gaussian noise}
\newacronym{tx}{TX}{transmitter}
\newacronym{ir}{IR}{information receiver}
\newacronym{eh}{EH}{energy harvester}
\newacronym{ap}{AP}{average power}
\newacronym{pp}{PP}{peak power}
\newacronym{siso}{SISO}{single-input single-output}
\newacronym{mimo}{MIMO}{multiple-input multiple-output}
\newacronym{miso}{MISO}{multiple-input single-output}
\newacronym{simo}{SIMO}{single-input multiple-output}
\newacronym{rf}{RF}{radio frequency}
\newacronym{dc}{DC}{direct current}
\newacronym{ac}{AC}{alternative current}
\newacronym{papr}{PAPR}{peak-to-average power ratio}
\newacronym{lp}{LPF}{low-pass filter}
\newacronym{mc}{MC}{matching circuit}
\newacronym{mrt}{MRT}{maximum ratio transmission}
\newacronym{rv}{RV}{random variable}
\newacronym{iid}{i.i.d.}{independent and identically distributed}
\newacronym{pdf}{pdf}{probability density function}
\newacronym{dnn}{DNN}{dense neural networks}
\newacronym{mdp}{MDP}{Markov decision process}
\newacronym{sca}{SCA}{successive convex approximation}
\newacronym{sdr}{SDR}{semi-definite relaxation}
\newacronym{spr}{LP}{low power}
\newacronym{mpr}{MP}{medium power}
\newacronym{lpr}{HP}{high power}
\DeclareMathOperator{\rank}{rank}
\newcommand{\norm}[1]{\left\lVert#1\right\rVert_2}
\newcommand{\Tr}[1]{\text{Tr}\{#1\} }
\begin{document}

	\newtheorem{proposition}{Proposition}	
	\newtheorem{lemma}{Lemma}	
	\newtheorem{theorem}{Theorem}	
	\newtheorem{corollary}{Corollary}
	\newtheorem{assumption}{Assumption}	
	\newtheorem{remark}{Remark}	
	
	\title{Optimal Transmit Strategy for MIMO WPT Systems With Non-linear Energy Harvesting}
	
	\author{\IEEEauthorblockN{Nikita Shanin, Laura Cottatellucci, and Robert Schober}
	\IEEEauthorblockA{\textit{Friedrich-Alexander-Universit\"{a}t Erlangen-N\"{u}rnberg (FAU), Germany} }	}

	\IEEEspecialpapernotice{(Invited Paper)}
	
	\maketitle
	
\begin{abstract}
In this paper, we study multiple-input multiple-output (MIMO) wireless power transfer (WPT) systems, where the energy harvester (EH) node is equipped with multiple non-linear rectennas.
We characterize the optimal transmit strategy by the optimal distribution of the transmit symbol vector that maximizes the average harvested power at the EH subject to a constraint on the power budget of the transmitter.
We show that the optimal transmit strategy employs scalar unit-norm input symbols with arbitrary phase and two beamforming vectors, which are determined as solutions of a non-convex optimization problem.
To solve this problem, we propose an iterative algorithm based on a two-dimensional grid search, semi-definite relaxation, and successive convex approximation.
Our simulation results reveal that the proposed MIMO WPT design significantly outperforms two baseline schemes based on a linear EH model and a single beamforming vector, respectively.
Finally, we show that the average harvested power grows linearly with the number of rectennas at the EH node and saturates for a large number of TX antennas.
\end{abstract}
\setcounter{footnote}{0}
 
	\section{Introduction}
	\label{Section:Introduction}	
	Recently, \gls*{wpt} has attracted significant attention as a promising technology for next-generation communication networks since it enables remote recharging of the batteries of low-power \gls*{iot} devices \cite{Grover2010, Zhang2013, Boshkovska2015, Kim2020, Boshkovska2017a, Ma2019, Clerckx2018, Shen2020, Morsi2019, Shanin2020, Shanin2021a}.
First, in \cite{Grover2010}, the authors studied \gls*{siso} WPT systems and showed that broadcasting a single sinusoidal signal is optimal for the maximization of the power delivered to the \gls*{eh}. 
Then, in \cite{Zhang2013}, the authors extended the results in \cite{Grover2010} to \gls*{mimo} WPT systems, where the received power at the EH is maximized if a scalar input symbol and \emph{energy beamforming} are employed at the \gls*{tx}.
However, in \cite{Boshkovska2015} and \cite{Kim2020}, the authors showed that practical EH circuits are non-linear and, hence, the solutions in \cite{Grover2010} and \cite{Zhang2013} are not necessarily optimal for the maximization of the power harvested by the EH.
Therefore, an accurate modeling of the EH is essential for the design of practical \gls*{wpt} systems \cite{Boshkovska2015, Kim2020, Boshkovska2017a, Ma2019, Clerckx2018, Shen2020, Morsi2019, Shanin2020, Shanin2021a}.

Practical EHs typically employ a rectenna, i.e., an antenna connected to a rectifier, which comprises a non-linear circuit featuring a diode.
In \cite{Boshkovska2015} and \cite{Kim2020}, the authors showed that while, in the low input power regime, the rectenna non-linearity is determined by the non-linear forward-bias current-voltage characteristic of the rectifying diode, the rectennas also typically suffer from saturation in the high input power regime due to the breakdown effect of the diode.
Furthermore, the authors in \cite{Boshkovska2015} proposed a rectenna model, where, for signals with a fixed known waveform, the harvested power is specified as a parameterized sigmoidal function of the power of the received signal. 
This model was exploited for the design of MIMO WPT systems in, e.g., \cite{Boshkovska2017a, Ma2019}, where Gaussian signals were assumed.
Next, the authors in \cite{Clerckx2018} derived a non-linear EH model based on the Taylor series expansion of the current flow through the rectifying diode.
Exploiting this model, the authors studied a \gls*{miso} WPT system with a single-antenna EH and showed that the harvested power is maximized with energy beamforming \cite{Zhang2013}, which reduces to \gls*{mrt} in this case.
Furthermore, in \cite{Shen2020}, using the model in \cite{Clerckx2018}, the authors considered MIMO WPT systems, whose EH nodes were equipped with multiple rectennas, and proposed an iterative algorithm to determine the transmit beamforming vector that maximizes the harvested power at the EH node.

Although the results in \cite{Boshkovska2015, Kim2020, Boshkovska2017a, Ma2019, Clerckx2018, Shen2020} provide important insights for the design of MIMO WPT systems, their applicability is limited since, for the model in \cite{Boshkovska2015}, signals with a fixed known waveform are assumed, and for the model in \cite{Clerckx2018}, the saturation of the harvested power is neglected.
Therefore, the authors in \cite{Morsi2019} derived a realistic EH model that accurately captures the rectenna non-linearity in both the low and high input power regimes.
Furthermore, in \cite{Morsi2019}, the authors showed that for \gls*{siso} WPT systems, it is optimal to adopt ON-OFF signaling at the TX.
The ON symbol and its probability were chosen to maximize the average harvested power without saturating the EH while satisfying the average power constraint of the TX.
The optimality of ON-OFF signaling for SISO WPT systems was confirmed in \cite{Shanin2020}, where a learning-based approach was employed to model non-linear rectenna circuits equipped with a single and multiple diodes, respectively.
Finally, in \cite{Shanin2021a}, exploiting the EH model in \cite{Morsi2019}, the authors studied the harvested power region of two-user \gls*{miso} WPT systems.
However, to the best of the authors' knowledge, the problem of optimizing the transmit strategy for MIMO WPT systems, for the case where the EH node is equipped with multiple rectennas exhibiting non-linear behavior in both the low and high input power regimes, has not been solved in the literature, yet.

In this paper, we determine the optimal transmit strategy for the maximization of the average harvested power at the EH node for MIMO WPT systems.
To this end, we consider a multi-rectenna EH model, where each antenna is connected to a dedicated rectifier and adopt the rectenna model derived in \cite{Morsi2019} to take into account the EH non-linearity.
Then, we formulate an optimization problem for the maximization of the average harvested power at the EH under a constraint on the power budget of the TX.
We show that the optimal transmit strategy employs scalar unit-norm input symbols with arbitrary phase and two beamforming vectors.
To determine these beamforming vectors, we propose a low-complexity iterative algorithm based on a two-dimensional grid search, \gls*{sdr}, and \gls*{sca} \cite{Grippo2000, Xu2019, Sun2017}.
Our simulation results reveal that the proposed MIMO WPT design outperforms two baseline schemes, which are based on a linear EH model and a single beamforming vector, respectively.
Furthermore, we observe that the average harvested power grows practically linearly with the number of rectennas equipped at the EH and saturates for a large number of TX antennas.

	The remainder of this paper is organized as follows. 
	In Section II, we introduce the system model and the adopted EH model.
	In Section III, we formulate and solve the optimization problem for the maximization of the harvested power.
	In Section IV, we provide numerical results to evaluate the performance of the proposed design.
	Finally, in Section V, we draw some conclusions.
	
	\emph{Notation:} Bold upper case letters $\boldsymbol{X}$ represent matrices and ${X}_{i,j}$ denotes the element of $\boldsymbol{X}$ in row $i$ and column $j$. 
	Bold lower case letters $\boldsymbol{x}$ stand for vectors and ${x}_{i}$ is the $i^\text{th}$ element of $\boldsymbol{x}$.
	$\boldsymbol{X}^H$, $\Tr{\boldsymbol{X}}$, and $\rank \{\boldsymbol{X}\}$ denote the Hermitian, trace, and rank of matrix $\boldsymbol{X}$, respectively.
	The expectation with respect to random variable $x$ is denoted by $\mathbb{E}_x\{\cdot\}$. 
	The real part of a complex number is denoted by $\Re\{ \cdot \}$.
	$\boldsymbol{x}^\top$ and $\norm{\boldsymbol{x}}$ represent the transpose and L2-norm of $\boldsymbol{x}$, respectively.
	The imaginary unit is denoted by $j$.
	The sets of real and complex numbers are denoted by $\mathbb{R}$ and $\mathbb{C}$, respectively.
	$\boldsymbol{1}_K$ and $\boldsymbol{0}_K$ represent column vectors comprising $K$ elements, where all elements are equal to $1$ and $0$, respectively.
	The Dirac delta function is denoted by $\delta(x)$.

	\section{System Model}
	\label{Section:SystemModelPreliminaries}
	In this section, we present the MIMO WPT system model and discuss the adopted multi-antenna EH model.
		\subsection{System Model}
		\label{Section:SystemModel}
		\begin{figure}[!t]
	\centering
	\includegraphics[width=0.4\textwidth, draft = false]{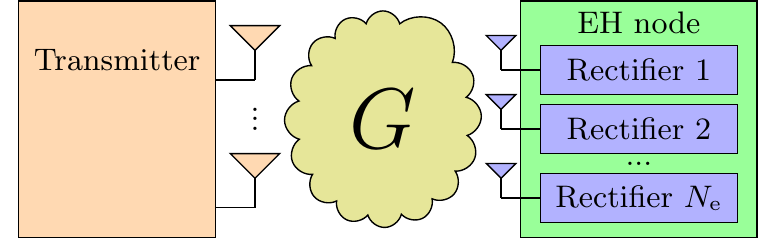}
	\caption{MIMO WPT system comprising multi-antenna TX and EH nodes. Here, $\boldsymbol{G}$ denotes the channel between the TX and EH.}
	\label{Fig:SystemModel}
	\vspace*{-10pt}
\end{figure}
We consider a narrow-band MIMO \gls*{wpt} system comprising a \gls*{tx} and an \gls*{eh} equipped with $N_\text{t} \geq 1$ and $N_\text{e} \geq 1$ antennas, respectively, see Fig.~\ref{Fig:SystemModel}.
The TX broadcasts a pulse-modulated RF signal, whose equivalent complex baseband (ECB) representation is modeled as $\boldsymbol{x}(t) = \sum_n \boldsymbol{x}[n] \psi(t-nT)$, where $\boldsymbol{x}[n] \in \mathbb{C}^{N_\text{t}}$ is the transmitted vector in time slot $n$, $\psi(t)$ is the rectangular transmit pulse shape, and $T$ is the symbol duration.
Transmit vectors $\boldsymbol{x}[n]$ are mutually independent realizations of a random vector $\boldsymbol{x}$, whose \gls*{pdf} is denoted by $p_{\boldsymbol{x}}(\boldsymbol{x})$.

The ECB channel between the TX and antenna $p$ of the EH node is characterized by row-vector $\boldsymbol{g}_p \in \mathbb{C}^{1 \times N_\text{t}}$, $p \in \{1,2,...,N_\text{e}\}$.
Thus, the RF signal received at antenna $p$ of the EH node is given by ${z^\text{RF}_{p}}(t) = \sqrt{2} \Re\{ \boldsymbol{g}_p \boldsymbol{x}(t) \exp(j 2 \pi f_c t) \} $, where $f_c$ denotes the carrier frequency.
The noise received at the EH node is ignored since its contribution to the harvested power is negligible.

		\subsection{Energy Harvester Model}
		\label{Section:EhModel}
		In this paper, we assume that the EH node is equipped with $N_\text{e}$ rectennas, i.e., each antenna is connected to a dedicated rectifier using a matching circuit, see Fig.~\ref{Fig:SystemModel}, \cite{Morsi2019, Shanin2020}.
In order to maximize the power delivered to the rectifier, the matching circuit is typically well-tuned to the carrier frequency $f_c$ and is designed to match the input impedance of the non-linear rectifier with the output impedance of the antenna \cite{Morsi2019}.
The non-linear rectifier, which comprises a rectifying diode and a low-pass filter, converts the RF signal ${z^\text{RF}_{p}}(t)$ received by rectenna $p$ to a \gls*{dc} signal at the load resistor $R_\text{L}$.

In this paper, we adopt the rectenna model derived in \cite{Morsi2019}. 
Thus, the power harvested by rectenna $p, p\in\{1,2,...,N_\text{e}\},$ as a function of magnitude $|z_p|$ of the received ECB signal $z_p = \boldsymbol{g}_p \boldsymbol{x}$ is modeled as\footnotemark
\begin{equation}
	\phi(|z_p|^2) = \min \{\varphi(|z_p|^2), \varphi(A_s^2)\},\label{Eqn:RaniaModel}
\end{equation}
\noindent\hspace*{-1pt}where $\varphi(|z_p|^2) = \Big[\frac{1}{a} W_0 \Big(a\exp(a) I_0 \big(B\sqrt{2|{z_p}|^2}\big) \Big)-1 \Big]^2 I_s^2 R_L$, $a = \frac{I_s(R_L + R_s)}{\mu V_\text{T}}$, $B = [\mu V_\text{T} \sqrt{\Re \{1 / Z_a^* \} }]^{-1}$, and $W_0(\cdot)$ and $I_0(\cdot)$ are the principal branch of the Lambert-W function and the zeroth order modified Bessel function of the first kind, respectively.
\footnotetext{In this paper, we assume that all rectennas of the EH node are memoryless and have identical parameters. Therefore, we omit the time slot index $n$ and model all rectennas by the same function in (\ref{Eqn:RaniaModel}).}
Here, $Z_a^*$, $V_\text{T}$, $I_s$, $R_s$, and $\mu \in [1,2]$ are parameters of the rectenna circuit, namely, the complex-conjugate of the input impedance of the rectifier circuit, the thermal voltage, the reverse bias saturation current, the series antenna resistance, and the ideality factor of the diode, respectively. 
These parameters are determined by the circuit elements and are independent of the received signal.
Since, for large input power levels, rectenna circuits are driven into saturation \cite{Boshkovska2015, Morsi2019, Shanin2020}, the function in (\ref{Eqn:RaniaModel}) is bounded, i.e., $\phi(|z_p|^2) \leq \phi(A_s^2) , \; \forall z_p \in \mathbb{C}$, where $A_s$ is the minimum input signal magnitude level at which the output power starts to saturate.
Finally, we define the total harvested power as the sum of the powers harvested by the rectennas of the EH node, i.e., $\psi(\boldsymbol{x}) = \sum_{p = 1}^{N_\text{e}} \phi(|\boldsymbol{g}_p \boldsymbol{x}|^2)$.

		\section{Problem Formulation and Solution}
		\label{Section:ProblemFormulation}
		In this section, we formulate an optimization problem for the maximization of the average harvested power of the considered MIMO WPT system.
		Then, in order to solve this problem, we formulate and solve a related auxiliary optimization problem, where we maximize the expected value of a one-dimensional function of a random variable under a constraint on its mean value.
		Finally, we solve the original optimization problem and provide the optimal transmit strategy.
		\subsection{Problem Formulation}
		In this paper, we characterize the optimal transmit strategy by the pdf $p^*_{\boldsymbol{x}}(\boldsymbol{x})$ of transmit symbol vector $\boldsymbol{x}$ that maximizes the average total harvested power at the EH node.
Thus, we formulate the following optimization problem:
\begin{subequations}
	\begin{align}
	\maximize_{ {p}_{\boldsymbol{x}} } \quad &\overline{\Phi}(\boldsymbol{x}; p_{\boldsymbol{x}}) 
	\label{Eqn:WPT_GeneralProblem_Obj} \\
	\subjectto \quad & \int_{\boldsymbol{x}} \norm{\boldsymbol{x}}^2 p_{\boldsymbol{x}}(\boldsymbol{x}) d \boldsymbol{x} \leq P_x, \label{Eqn:WPT_GeneralProblem_C1}\\
	&\int_{\boldsymbol{x}} p_{\boldsymbol{x}}(\boldsymbol{x}) d \boldsymbol{x} = 1, \label{Eqn:WPT_GeneralProblem_C2}
	\end{align}
	\label{Eqn:WPT_GeneralProblem}
\end{subequations}
\noindent\hspace*{-3.5pt}where the average harvested power at the EH node is defined as
$\overline{\Phi}(\boldsymbol{x}; p_{\boldsymbol{x}}) = \mathbb{E}_{\boldsymbol{x}} \{\psi(\boldsymbol{x})\}$.
We impose constraints (\ref{Eqn:WPT_GeneralProblem_C1}) and (\ref{Eqn:WPT_GeneralProblem_C2}) to limit the transmit power budget at the TX and ensure that $p_{\boldsymbol{x}}(\boldsymbol{x})$ is a valid pdf, respectively.

\begin{remark}
	\label{Remark:NotUniqueness}
	Optimization problem (\ref{Eqn:WPT_GeneralProblem}) may have an infinite number of solutions. 
	In particular, since $\| \boldsymbol{x} \|_2$, the harvested power in (\ref{Eqn:RaniaModel}) and, hence, the average harvested power $\overline{\Phi}(\cdot)$ are invariant under phase rotation of the transmit symbol vector $\boldsymbol{x}$, given an optimal pdf $p^*_{\boldsymbol{x}}(\boldsymbol{x})$ and a random phase $\phi_x \in [-\pi, \pi)$ with an arbitrary pdf $p_{\phi_x}(\phi_x)$, the random vector $\boldsymbol{\tilde{x}}=\exp(j\phi_x)\boldsymbol{x}$ is still a solution of (\ref{Eqn:WPT_GeneralProblem}) \cite{Shanin2020}.
	Furthermore, we note that if affordable by the power budget $P_x$, there may be an infinite number of pdfs that drive all the rectifiers of the EH into saturation, i.e., yield $\overline{\Phi}(\boldsymbol{x}; p_{\boldsymbol{x}}) = N_\text{\upshape e} \phi(A_s^2)$ and satisfy constraints (\ref{Eqn:WPT_GeneralProblem_C1}) and (\ref{Eqn:WPT_GeneralProblem_C2}).
	In the following, we determine one pdf $p_{\boldsymbol{x}}^*$ that solves (\ref{Eqn:WPT_GeneralProblem}).	
\end{remark}

In order to find the optimal solution of (\ref{Eqn:WPT_GeneralProblem}), it is convenient to solve first a related auxiliary optimization problem.
To this end, in the next subsection, we consider the maximization of the expectation of a non-decreasing function $f(\nu)$ of a scalar random variable $\nu$ under a constraint on the mean value of $\nu$.

		\subsection{Auxiliary Optimization Problem}
		Let us consider the following auxiliary optimization problem:
\begin{equation}
	\maximize_{ {p}_{\nu} } \; \mathbb{E}_\nu \{f(\nu)\} \quad \subjectto \; \mathbb{E}_\nu \{\nu\} \leq \overline{\nu},
	\label{Eqn:GeneralOptimizationProblem}
\end{equation}
\noindent where we optimize the pdf $p_\nu(\nu)$ of $\nu$ for the maximization of the expectation of $f(\nu)$ under a constraint $\overline{\nu}$ on the mean value of $\nu$.
In order to solve (\ref{Eqn:GeneralOptimizationProblem}), let us first define the slope of the straight line connecting points $(\nu_1, f(\nu_1))$ and $(\nu_2, f(\nu_2))$, where $\nu_2 > \nu_1$, as follows:
\begin{equation}
	s(\nu_1,\nu_2) = \frac{f(\nu_2) - f(\nu_1)}{\nu_2 - \nu_1}.
	\label{Eqn:SlopeFunctionF}
\end{equation}
We note that if $f(\nu)$ is convex (concave), the solution of (\ref{Eqn:GeneralOptimizationProblem}) is determined by the Edmundson-Madansky (Jensen's) inequality, see, e.g., \cite{Dokov2002}. 
However, since we intend to apply this result to function $\psi(\cdot)$, which is neither convex nor concave, in the following lemma, we extend the results in \cite{Dokov2002} to arbitrary non-decreasing functions $f(\nu)$ and solve (\ref{Eqn:GeneralOptimizationProblem}).
\begin{lemma}
	The solution\footnotemark\hspace*{3pt}of optimization problem (\ref{Eqn:GeneralOptimizationProblem}) is a discrete pdf given by $p_\nu^*(\nu) = \beta \delta(\nu - \nu_1^*) + (1-\beta) \delta(\nu - \nu_2^*)$, where
	$\nu_1^* = \argmin_{\nu_1 \leq \overline{\nu} } c(\nu_1)$, $c(\nu_1) = \max_{\nu_2 \geq \overline{\nu}} s(\nu_1,\nu_2)$, $\nu_2^* = \argmax_{\nu_2 \geq \overline{\nu}} s(\nu_1^*,\nu_2)$, and $\beta = \frac{\nu_2^* - \overline{\nu}}{\nu_2^* - \nu_1^*}$.
	\label{Theorem:Corollary2}	
\end{lemma}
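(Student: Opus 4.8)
The plan is to prove optimality by a matching-bound argument: exhibit a feasible two-point distribution attaining a certain value, and an affine upper bound on $\mathbb{E}_\nu\{f(\nu)\}$ valid for \emph{every} feasible $p_\nu$ that equals the same value. First I would record the elementary properties of the candidate $p_\nu^*$. With $\nu_1^*\le\overline{\nu}\le\nu_2^*$ and $\beta=\frac{\nu_2^*-\overline{\nu}}{\nu_2^*-\nu_1^*}$, a direct computation gives $\beta\nu_1^*+(1-\beta)\nu_2^*=\overline{\nu}$, so the mean constraint of (\ref{Eqn:GeneralOptimizationProblem}) holds with equality and $p_\nu^*$ is feasible (and $\beta\in[0,1]$). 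Moreover, since $\beta f(\nu_1^*)+(1-\beta)f(\nu_2^*)$ is the value at $\overline{\nu}$ of the chord through $(\nu_1^*,f(\nu_1^*))$ and $(\nu_2^*,f(\nu_2^*))$, the attained objective equals $\ell(\overline{\nu})$, where $\ell(\nu)=f(\nu_1^*)+c^*(\nu-\nu_1^*)$ and $c^*=s(\nu_1^*,\nu_2^*)=c(\nu_1^*)$, using the slope in (\ref{Eqn:SlopeFunctionF}).

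The core is a supporting-line bound in the spirit of \cite{Dokov2002}. If $\ell$ is affine with nonnegative slope and $\ell(\nu)\ge f(\nu)$ for all $\nu$, then for any feasible $p_\nu$,
\[ \mathbb{E}_\nu\{f(\nu)\}\le \mathbb{E}_\nu\{\ell(\nu)\}=\ell(\mathbb{E}_\nu\{\nu\})\le\ell(\overline{\nu}), \]
where the last step uses $c^*\ge0$ together with $\mathbb{E}_\nu\{\nu\}\le\overline{\nu}$. Nonnegativity $c^*\ge0$ is immediate because $f$ is non-decreasing and $\nu_2^*\ge\nu_1^*$. Since $\ell(\overline{\nu})$ is exactly the value attained by $p_\nu^*$, establishing that this particular $\ell$ dominates $f$ everywhere proves that $p_\nu^*$ solves (\ref{Eqn:GeneralOptimizationProblem}).

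The main obstacle is proving $\ell(\nu)\ge f(\nu)$ for \emph{all} $\nu$, which I would split into three regions driven by the defining optimality of $\nu_1^*$ and $\nu_2^*$. For $\nu\ge\overline{\nu}$ this is immediate: $s(\nu_1^*,\nu)\le c(\nu_1^*)=c^*$ by the definition of $c(\cdot)$, which rearranges to $f(\nu)\le\ell(\nu)$. For the two sub-ranges $\nu<\nu_1^*$ and $\nu_1^*<\nu\le\overline{\nu}$ the key tool is the elementary three-slope identity: if $\alpha<\gamma<\delta$, then $s(\alpha,\delta)$ is a convex combination of $s(\alpha,\gamma)$ and $s(\gamma,\delta)$. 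Applying it to the triple $\nu<\nu_1^*<\nu_2(\nu)$, where $\nu_2(\nu)\ge\overline{\nu}$ attains $c(\nu)$, and combining $s(\nu_1^*,\nu_2(\nu))\le c^*$ (definition of $c(\nu_1^*)$) with $c(\nu)\ge c^*$ (minimality of $c$ at $\nu_1^*$), I would conclude $s(\nu,\nu_1^*)\ge c^*$, which is equivalent to $f(\nu)\le\ell(\nu)$. The range $\nu_1^*<\nu\le\overline{\nu}$ I would treat by contradiction in the same spirit: assuming $s(\nu,\nu_2^*)<c^*$ forces, via the three-slope identity on $\nu_1^*<\nu<\nu_2'$ for a maximizer $\nu_2'$ of $s(\nu,\cdot)$ with $\nu_2'\ge\overline{\nu}$, a chord $s(\nu_1^*,\nu_2')>c^*$, contradicting $c(\nu_1^*)=c^*$; hence $s(\nu,\nu_2^*)\ge c^*$, i.e.\ $f(\nu)\le\ell(\nu)$.

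Finally I would address existence of the extremizers defining $c(\nu_1)$, $\nu_1^*$, and $\nu_2^*$: continuity of $f$ and boundedness of the relevant function (in the application $\psi$ saturates, so $s(\nu_1,\nu_2)\to0$ as $\nu_2\to\infty$) guarantee the maxima and the minimum are attained at finite arguments, and the degenerate cases $\nu_1^*=\nu_2^*$ or $\beta\in\{0,1\}$ simply collapse $p_\nu^*$ to a single mass point without affecting the argument. Collecting the three regions yields $\ell\ge f$ globally, and combining this upper bound with the achievability from the first paragraph completes the proof.
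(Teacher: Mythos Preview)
The paper does not actually prove this lemma: its proof reads ``Please refer to \cite[Appendix A]{Shanin2021a}'', so there is no argument here to compare against. Your proposal supplies a self-contained proof via a supporting-line (Edmundson--Madansky type) argument: build the affine majorant $\ell(\nu)=f(\nu_1^*)+c^*(\nu-\nu_1^*)$ through the two optimal mass points, verify $\ell\ge f$ everywhere using the min--max characterisation of $c^*$ together with the three-slope identity, and then bound $\mathbb{E}\{f(\nu)\}\le\ell(\overline{\nu})$ for any feasible $p_\nu$ while observing that the two-point law attains $\ell(\overline{\nu})$. This is the standard and correct route; it is in the spirit of \cite{Dokov2002}, which the paper itself invokes, so it is almost certainly aligned with what \cite{Shanin2021a} does.

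One small point worth tightening: in the middle range $\nu_1^*<\nu\le\overline{\nu}$, your contradiction from ``$s(\nu,\nu_2^*)<c^*$'' to ``$s(\nu_1^*,\nu_2')>c^*$'' is compressed. The clean way is to note that $s(\nu_1^*,\nu_2^*)=c^*$ and $s(\nu,\nu_2^*)<c^*$, so the three-slope identity on $\nu_1^*<\nu<\nu_2^*$ forces $s(\nu_1^*,\nu)>c^*$; combining this with $s(\nu,\nu_2')=c(\nu)\ge c^*$ in the identity on $\nu_1^*<\nu<\nu_2'$ then yields the strict inequality $s(\nu_1^*,\nu_2')>c^*$ you need. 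Alternatively, you can avoid the contradiction entirely: from $s(\nu_1^*,\nu_2')\le c^*$ and $s(\nu,\nu_2')\ge c^*$, the identity on $\nu_1^*<\nu<\nu_2'$ directly gives $s(\nu_1^*,\nu)\le c^*$, i.e., $f(\nu)\le\ell(\nu)$. Either way the argument closes.
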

\begin{proof}
	Please refer to \cite[Appendix A]{Shanin2021a}.
\end{proof}
\footnotetext{We note that, similar to (\ref{Eqn:WPT_GeneralProblem}), problem (\ref{Eqn:GeneralOptimizationProblem}) may have an infinite number of solutions, i.e., for a monotonic non-decreasing function $f(\cdot)$, there may exist multiple pdfs $p_{\nu}(\nu)$ that yield equal values $\mathbb{E}_\nu \{f(\nu)\}$ and satisfy the constraint in (\ref{Eqn:GeneralOptimizationProblem}).
In Lemma~\ref{Theorem:Corollary2}, we obtain one solution of (\ref{Eqn:GeneralOptimizationProblem}).}

In the following, we exploit the result in Lemma~\ref{Theorem:Corollary2} for solving the problem in (\ref{Eqn:WPT_GeneralProblem}).

		\subsection{Solution of Problem (\ref{Eqn:WPT_GeneralProblem})}
		\label{Section:MIMOSystem}
		In the following, we consider the MIMO WPT system in Fig.~\ref{Fig:SystemModel}.
In Proposition~\ref{Theorem:Proposition3}, we characterize the optimal pdf $p^*_{\boldsymbol{x}}(\boldsymbol{x})$ that solves (\ref{Eqn:WPT_GeneralProblem}).
\begin{proposition}
	For the considered MIMO WPT system, function $\overline{\Phi} (\cdot)$ is maximized for transmit symbol vectors $\boldsymbol{x} = \boldsymbol{w} s$, where $s = \exp(j \phi_s)$ is a unit norm symbol with an arbitrary phase $\phi_s$.
	Here, $\boldsymbol{w}$ is a discrete random beamforming vector, whose pdf is given by $p^*_{\boldsymbol{w}}(\boldsymbol{w}) = \beta \delta(\boldsymbol{w} - \boldsymbol{w}^*_1) + (1-\beta) \delta(\boldsymbol{w} - \boldsymbol{w}^*_2)$. 
	The beamforming vectors $\boldsymbol{w}^*_n, \; n\in\{1,2\}$, are given by
	\begin{align}
	\boldsymbol{w}^*_n &\in \{ \boldsymbol{w} : \psi(\boldsymbol{w}) = \Phi(\nu_n^*) \},
	\label{Eqn:MimoPropositionBeamformerProblem} \\
	\Phi(\nu) &= \max_{ \{ \boldsymbol{w} \, | \, \boldsymbol{w} \in \mathbb{C}^{N_\text{\upshape t}}, \norm{\boldsymbol{w}}^2 = \nu \}  } \psi(\boldsymbol{w}),
	\label{Eqn:MimoPropositionFunction}
	\end{align} 
	where $\nu^*_1$ and $\nu^*_2$ are defined as in Lemma~\ref{Theorem:Corollary2} with $\overline{\nu} = P_x$ and $s(\nu_1, \nu_2) = \big(\Phi(\nu_2) - \Phi(\nu_1)\big) / \big(\nu_2 - \nu_1\big)$.
	\label{Theorem:Proposition3}
\end{proposition}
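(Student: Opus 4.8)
The plan is to reduce the infinite-dimensional optimization over $p_{\boldsymbol{x}}$ in (\ref{Eqn:WPT_GeneralProblem}) to the scalar auxiliary problem (\ref{Eqn:GeneralOptimizationProblem}) solved in Lemma~\ref{Theorem:Corollary2}. The reduction rests on the observation that both the objective (\ref{Eqn:WPT_GeneralProblem_Obj}) and the power constraint (\ref{Eqn:WPT_GeneralProblem_C1}) depend on $\boldsymbol{x}$ only through the magnitudes $|\boldsymbol{g}_p \boldsymbol{x}|^2$ and the squared norm $\norm{\boldsymbol{x}}^2$. Hence the direction and the norm of $\boldsymbol{x}$ decouple, and the maximization over directions for each fixed norm is captured exactly by the envelope $\Phi(\nu)$ defined in (\ref{Eqn:MimoPropositionFunction}).

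First I would invoke the phase invariance established in Remark~\ref{Remark:NotUniqueness}: since $\psi(\boldsymbol{x})$ depends only on $|\boldsymbol{g}_p \boldsymbol{x}|^2$ and $\norm{\boldsymbol{x}}^2$ is unchanged by a unit-norm factor $s = \exp(j\phi_s)$, the substitution $\boldsymbol{x} = \boldsymbol{w} s$ affects neither the objective nor the constraints for any phase distribution, which justifies the factored form and leaves only the law of $\boldsymbol{w}$ to be optimized. Setting $\nu = \norm{\boldsymbol{w}}^2 = \norm{\boldsymbol{x}}^2$, the definition of $\Phi$ yields the pointwise bound $\psi(\boldsymbol{x}) \leq \Phi(\norm{\boldsymbol{x}}^2)$, so by monotonicity of expectation
\begin{equation}
\overline{\Phi}(\boldsymbol{x}; p_{\boldsymbol{x}}) = \mathbb{E}_{\boldsymbol{x}}\{\psi(\boldsymbol{x})\} \leq \mathbb{E}_\nu\{\Phi(\nu)\},
\end{equation}
while (\ref{Eqn:WPT_GeneralProblem_C1}) reduces to $\mathbb{E}_\nu\{\nu\} \leq P_x$. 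Thus every feasible $p_{\boldsymbol{x}}$ induces a $p_\nu$ that is feasible for (\ref{Eqn:GeneralOptimizationProblem}) with $\overline{\nu} = P_x$ and whose auxiliary objective upper-bounds $\overline{\Phi}$.

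To apply Lemma~\ref{Theorem:Corollary2} with $f = \Phi$ I must verify that $\Phi$ is non-decreasing: given a maximizer $\boldsymbol{w}_1$ at norm level $\nu_1 < \nu_2$, the scaled vector $\sqrt{\nu_2/\nu_1}\,\boldsymbol{w}_1$ has squared norm $\nu_2$, and since each summand $\phi(|\boldsymbol{g}_p \cdot|^2)$ of $\psi$ is non-decreasing in the signal magnitude by (\ref{Eqn:RaniaModel}), it achieves $\psi \geq \Phi(\nu_1)$, whence $\Phi(\nu_2) \geq \Phi(\nu_1)$. Lemma~\ref{Theorem:Corollary2} then returns the two-mass-point law $p_\nu^* = \beta\,\delta(\nu - \nu_1^*) + (1-\beta)\,\delta(\nu - \nu_2^*)$ with $\nu_1^*, \nu_2^*, \beta$ as stated. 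To close the argument I would show the bound is tight by construction: choosing $\boldsymbol{w}_n^*$ to attain the maximum in (\ref{Eqn:MimoPropositionFunction}), i.e., satisfying (\ref{Eqn:MimoPropositionBeamformerProblem}), and setting $p_{\boldsymbol{w}}^* = \beta\,\delta(\boldsymbol{w} - \boldsymbol{w}_1^*) + (1-\beta)\,\delta(\boldsymbol{w} - \boldsymbol{w}_2^*)$ gives $\psi(\boldsymbol{w}_n^*) = \Phi(\nu_n^*)$ at each atom, so this pdf meets (\ref{Eqn:WPT_GeneralProblem_C1})--(\ref{Eqn:WPT_GeneralProblem_C2}) and achieves $\mathbb{E}_\nu\{\Phi(\nu)\}$, matching the upper bound.

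The step I expect to be the main obstacle is making the decoupling watertight and confirming achievability, rather than the upper bound, which is immediate from the envelope definition. Specifically, I must guarantee that the maximizers $\boldsymbol{w}_n^*$ in (\ref{Eqn:MimoPropositionFunction}) exist, which follows from continuity of $\psi$ on the compact sphere $\norm{\boldsymbol{w}}^2 = \nu_n^*$, and that the reconstructed discrete law is a genuine pdf satisfying the power budget with equality, so that no feasible continuous distribution can outperform the two-beamformer solution. Establishing the monotonicity of $\Phi$ needed for Lemma~\ref{Theorem:Corollary2} is the other technical prerequisite, and together these facts make the envelope reduction valid.
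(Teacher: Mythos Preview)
Your proposal is correct and follows essentially the same route as the paper: reduce the vector problem to the scalar auxiliary problem via the envelope $\Phi(\nu)$, apply Lemma~\ref{Theorem:Corollary2} to obtain the two-mass-point law on $\nu$, and then lift back to beamforming vectors attaining $\Phi(\nu_n^*)$. Your treatment is in fact slightly more explicit than the paper's, since you spell out the monotonicity of $\Phi$ and the existence of the maximizers via compactness, both of which the paper asserts without elaboration.
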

\begin{proof}
	Please refer to Appendix~\ref{Appendix:Prop3Proof}.
\end{proof}

Proposition~\ref{Theorem:Proposition3} reveals that the optimal transmit vector $\boldsymbol{x}$ is discrete and is characterized by scalar unit-norm symbols with an arbitrary phase\footnotemark\hspace*{0pt} and two beamforming vectors, $\boldsymbol{w}^*_1$ and $\boldsymbol{w}^*_2$.
We note that in order to determine these beamforming vectors, the function $\Phi(\cdot)$ and the values $\nu_1^*$, $\nu_2^*$ as solutions of (\ref{Eqn:MimoPropositionFunction}) and the non-convex min-max optimization problems in Lemma~\ref{Theorem:Corollary2}, respectively, are required.
In the following, first, exploiting \gls*{sdr} and \gls*{sca}, we develop a low-complexity algorithm to determine a suboptimal solution for (\ref{Eqn:MimoPropositionFunction}).
Then, due to the low dimensionality of the min-max problem in Lemma~\ref{Theorem:Corollary2}, we obtain the optimal values $\nu_1^*$, $\nu_2^*$ and the corresponding vectors $\boldsymbol{w}_1^*$, $\boldsymbol{w}_2^*$ via a two-dimensional grid search \cite{Coope2001}.
\footnotetext{We note that the phase $\phi_s$ of scalar symbol $s$ can be chosen arbitrarily in each time slot $n$. This degree of freedom can be further exploited, for example, for information transmission \cite{Shanin2020}.}

			\subsubsection{Solution of (\ref{Eqn:MimoPropositionFunction})}
			\label{Section:SuboptimalSolution}
			In the following, we exploit \gls*{sdr} and \gls*{sca} to develop an iterative low-complexity algorithm and determine a suboptimal solution of (\ref{Eqn:MimoPropositionFunction}).
To this end, we first define matrix $\boldsymbol{W} = \boldsymbol{w} \boldsymbol{w}^H$ and reformulate problem (\ref{Eqn:MimoPropositionFunction}) equivalently as follows:
\begin{subequations}
	\begin{align}
	\maximize_{\boldsymbol{W} \in \mathcal{S}_{+}} \quad & \Psi(\boldsymbol{W})
	\label{Eqn:MimoSuboptimalFunctionRef_Obj} \\
	\subjectto \quad &\Tr{\boldsymbol{W}} \leq \nu, \label{Eqn:MimoSuboptimalFunctionRef_C1} \\
	&\rank \{ \boldsymbol{W} \} = 1, \label{Eqn:MimoSuboptimalFunctionRef_C2}
	\end{align}	
	\label{Eqn:MimoSuboptimalFunctionRef}
\end{subequations}
\noindent\hspace*{-4pt}where $\Psi(\boldsymbol{W}) = \sum_{p = 1}^{N_\text{e}} \phi(\boldsymbol{g}_p \boldsymbol{W} \boldsymbol{g}_p^H)$ and $\mathcal{S}_{+}$ denotes the set of positive semidefinite matrices.
Since the objective function in (\ref{Eqn:MimoPropositionFunction}) is monotonic non-decreasing in $|\boldsymbol{w}| = \big[|w_1| \, |w_2| \, ... \, |w_{N_\text{t}}|\big]^\top$\hspace*{-3pt}, we equivalently replace the equality constraint in (\ref{Eqn:MimoPropositionFunction}) by inequality constraint (\ref{Eqn:MimoSuboptimalFunctionRef_C1}).

We note that optimization problem (\ref{Eqn:MimoSuboptimalFunctionRef}) is non-convex due to the non-convexity of objective function (\ref{Eqn:MimoSuboptimalFunctionRef_Obj}) and constraint (\ref{Eqn:MimoSuboptimalFunctionRef_C2}). 
Therefore, in order to obtain a suboptimal solution of (\ref{Eqn:MimoSuboptimalFunctionRef}), we first drop constraint (\ref{Eqn:MimoSuboptimalFunctionRef_C2}).
Then, we define sets $\mathcal{W}_k, k\in\{0,1,...,N_\text{e}\}$, where $\forall \boldsymbol{W} \in \mathcal{W}_k$, exactly $k$ rectifiers are driven into saturation.
We note that $\mathcal{W}_0 \cup \mathcal{W}_1 \cup ... \cup \mathcal{W}_{N_\text{e}} = \mathcal{S}_{+}$.
Furthermore, rectifier $p$ of the EH node is driven into saturation if and only if $\boldsymbol{g}_{p} \boldsymbol{W} \boldsymbol{g}_{p}^H \geq A_s^2$.
Hence, set $\mathcal{W}_k, k\in\{0,1,...,N_\text{e}\},$ consists of $T_k = \frac{N_\text{e}!}{k!(N_\text{e}-k)!}$ convex subsets, where $k!$ denotes the factorial of $k$, i.e., $\mathcal{W}_k = \mathcal{W}_k^1 \cup \mathcal{W}_k^2 \cup ... \cup \mathcal{W}^{T_k}_k$.
Each convex subset $\mathcal{W}^t_k, t\in\{1,2,...,T_k\},$ specifies a unique combination of $k$ rectennas, which are driven into saturation $\forall \, \boldsymbol{W} \in \mathcal{W}^t_k$.
We note that the objective function in (\ref{Eqn:MimoSuboptimalFunctionRef}) is convex on each of these subsets and, thus, applying \gls*{sca} for solving (\ref{Eqn:MimoSuboptimalFunctionRef}) on $\boldsymbol{W} \in \mathcal{W}^t_k$ is promising \cite{Sun2017, Lanckriet2009}.
Hence, the solution of (\ref{Eqn:MimoSuboptimalFunctionRef}) can be obtained by exploring the subsets $\mathcal{W}^t_k, t\in\{1,2,...,T_k\}, k\in\{0,1,...,N_\text{e}\}$ and solving the resulting optimization problem \cite{Shanin2021a}.
However, since the computational complexity of this exploration grows with $N_\text{e}!$, in the following, we propose a suboptimal solution of (\ref{Eqn:MimoSuboptimalFunctionRef}).

For a given transmit power limit $\nu$, we determine a set of rectennas $\mathcal{W}^*$, which will be driven into saturation.
To this end, we sort the channel gain vectors $\boldsymbol{g}_p$ in descending order of their norms as follows $\norm{\boldsymbol{g}_{p^1}} \geq \norm{\boldsymbol{g}_{p^2}} \geq ... \geq \norm{\boldsymbol{g}_{p^{N_\text{e}}}}$, where $p^k \in \{1,2,..,N_\text{e}\}$ and $k = 1,2,..,N_\text{e}$.
Then, we check if it is possible to drive the $k$ rectifiers with the best channel conditions, i.e., rectifiers $p^1, p^2, .., p^k$, into saturation.
To this end, we solve the following convex optimization problem:
\begin{subequations}
	\begin{align}
	\maximize_{\boldsymbol{W} \in \mathcal{S}_{+}} \quad  &1   \\
	\subjectto \quad & (-1)^{u_n} (\boldsymbol{g}_{p^n} 
	\boldsymbol{W} \boldsymbol{g}_{p^n}^H - A_s^2) \geq 0, \nonumber \\
	 &\qquad \qquad \qquad \qquad n \in \{1,2,..,N_\text{e}\}, \label{Eqn:MimoSuboptimalFeasibility_C2a} \\
	 & \Tr{\boldsymbol{W}} \leq \nu, \label{Eqn:MimoSuboptimalFeasibility_C2b}
	\end{align}	
	\label{Eqn:MimoSuboptimalFeasibility}
\end{subequations}
\noindent\hspace*{-3pt}where $\boldsymbol{u} = [u_1, u_2, .., u_{N_\text{e}}]^\top = [\boldsymbol{0}_k^\top \boldsymbol{1}_{N_\text{e}-k}^\top]^\top$ in (\ref{Eqn:MimoSuboptimalFeasibility_C2a}) determines the sets of rectifiers, which are and are not driven into saturation, respectively.
In the following theorem, we show that if (\ref{Eqn:MimoSuboptimalFeasibility}) is feasible and $k > 0$, a beamforming matrix $\boldsymbol{W}_{\!\!k}^*$ obtained as solution of (\ref{Eqn:MimoSuboptimalFeasibility}) has rank one.
\begin{proposition}
	If (\ref{Eqn:MimoSuboptimalFeasibility}) is feasible and $k > 0$, the optimal beamforming matrix $\boldsymbol{W}_{\!\!k}^*$ as solution of (\ref{Eqn:MimoSuboptimalFeasibility}) satisfies $\rank \boldsymbol{W}_{\!\!k}^* = 1$.
\end{proposition}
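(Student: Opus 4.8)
The plan is to prove rank-one tightness via the Karush--Kuhn--Tucker (KKT) conditions of the relaxation in (\ref{Eqn:MimoSuboptimalFeasibility}), which is the standard route for establishing tightness of beamforming semi-definite relaxations. Since the objective in (\ref{Eqn:MimoSuboptimalFeasibility}) is constant, every feasible point is optimal and the KKT system is degenerate, so I would first pin down a specific optimizer by minimizing a strictly monotone surrogate --- for concreteness the transmit power $\Tr{\boldsymbol{W}}$ --- over the same constraint set, retaining in particular the non-saturation constraints in (\ref{Eqn:MimoSuboptimalFeasibility_C2a}) for $n>k$. Its optimal value is at most $\nu$ whenever (\ref{Eqn:MimoSuboptimalFeasibility}) is feasible, hence any rank-one minimizer is a rank-one solution of (\ref{Eqn:MimoSuboptimalFeasibility}).

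Next I would form the Lagrangian with multipliers $\alpha_n\ge 0$ for the saturation constraints $n\le k$, $\beta_n\ge 0$ for the non-saturation constraints $n>k$, and a matrix $\boldsymbol{Z}\succeq\boldsymbol{0}$ for $\boldsymbol{W}\succeq\boldsymbol{0}$. Stationarity in $\boldsymbol{W}$ then yields a dual certificate $\boldsymbol{Z}=\boldsymbol{I}+\sum_{n>k}\beta_n\,\boldsymbol{g}_{p^n}^H\boldsymbol{g}_{p^n}-\sum_{n\le k}\alpha_n\,\boldsymbol{g}_{p^n}^H\boldsymbol{g}_{p^n}$, i.e. a strictly positive-definite part $\boldsymbol{Q}=\boldsymbol{I}+\sum_{n>k}\beta_n\boldsymbol{g}_{p^n}^H\boldsymbol{g}_{p^n}\succ\boldsymbol{0}$ arising from the trace gradient and the upper-bound constraints, minus a part of rank at most $k$ built from the rank-one data matrices $\boldsymbol{g}_{p^n}^H\boldsymbol{g}_{p^n}$ of the saturated rectennas. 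Complementary slackness $\boldsymbol{Z}\boldsymbol{W}_{\!\!k}^*=\boldsymbol{0}$ forces $\text{range}(\boldsymbol{W}_{\!\!k}^*)\subseteq\ker\boldsymbol{Z}$, so $\rank\boldsymbol{W}_{\!\!k}^*\le\dim\ker\boldsymbol{Z}$; and since $\boldsymbol{Q}\succ\boldsymbol{0}$, every $\boldsymbol{x}\in\ker\boldsymbol{Z}$ satisfies $\boldsymbol{x}=\boldsymbol{Q}^{-1}\big(\sum_{n\le k}\alpha_n\boldsymbol{g}_{p^n}^H\boldsymbol{g}_{p^n}\big)\boldsymbol{x}$ and therefore lies in an at-most-$k$-dimensional subspace.

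The hard part is to sharpen this to $\dim\ker\boldsymbol{Z}=1$. Plain constraint counting will not do: Pataki/Barvinok-type bounds only give rank of order $\sqrt{k}$, and the single-group multicast problem --- which is exactly (\ref{Eqn:MimoSuboptimalFeasibility}) stripped of its upper bounds --- is NP-hard and its relaxation is not rank-one tight in general. The proof must therefore use the full structure rather than the number of constraints, and I expect this is where the sign pattern $(-1)^{u_n}$ and the channel-norm ordering $\norm{\boldsymbol{g}_{p^1}}\ge\norm{\boldsymbol{g}_{p^2}}\ge\cdots$ enter: because saturation is required only for the strongest channels while the weakest ones are capped below $A_s^2$, a Cauchy--Schwarz estimate $|\boldsymbol{g}_{p^n}\boldsymbol{w}|^2\le\norm{\boldsymbol{g}_{p^n}}^2\norm{\boldsymbol{w}}^2$ tied to the ordering should let a single, suitably phased beamformer satisfy all the upper bounds once it satisfies the lower ones, collapsing the number of genuinely active constraints and pinning $\ker\boldsymbol{Z}$ to a single direction. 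A more constructive fallback I would keep in reserve is a complex rank-one matrix decomposition in the spirit of Ai--Huang--Zhang: start from any feasible $\boldsymbol{W}_{\!\!k}^*$ and repeatedly split it into rank-one components while preserving the active quadratic forms $\boldsymbol{g}_{p^n}\boldsymbol{W}\boldsymbol{g}_{p^n}^H$ and the trace, exploiting the extra degrees of freedom of the complex (rather than real) Hermitian cone to descend to rank one.

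Finally, once rank one is established, the hypothesis $k>0$ guarantees $\boldsymbol{W}_{\!\!k}^*\neq\boldsymbol{0}$, since at least one saturation constraint $\boldsymbol{g}_{p^n}\boldsymbol{W}_{\!\!k}^*\boldsymbol{g}_{p^n}^H\ge A_s^2>0$ is active; hence the rank is exactly one and the beamformer is recovered from $\boldsymbol{W}_{\!\!k}^*=\boldsymbol{w}\boldsymbol{w}^H$.
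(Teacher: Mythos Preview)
Your overall route --- KKT conditions, dual certificate of the form ``positive-definite part minus a sum of rank-one channel matrices,'' complementary slackness forcing $\text{range}(\boldsymbol{W}_{\!\!k}^*)\subseteq\ker\boldsymbol{Z}$, Sylvester to bound the rank --- is exactly what the paper does in Appendix~\ref{Appendix:TheoremProof}. You even improve on the paper in one respect: the paper applies KKT directly to the constant-objective feasibility problem, where all multipliers equal to zero trivially satisfy stationarity, so nothing is pinned down; your device of minimizing $\Tr{\boldsymbol{W}}$ over the same feasible set is the clean way to obtain a non-degenerate dual and is what the paper \emph{should} have done.

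The gap is at the step you yourself flag as ``the hard part.'' The paper does \emph{not} sharpen $\dim\ker\boldsymbol{Z}\le k$ to $1$ by any structural argument involving the norm ordering, Cauchy--Schwarz, or an Ai--Huang--Zhang-type rank reduction. It simply observes that the stationarity condition reads $\boldsymbol{Y}^*=\xi^*\boldsymbol{I}-\boldsymbol{\Delta}$ with $\boldsymbol{\Delta}=\sum_n\mu_n^*\boldsymbol{g}_n^H\boldsymbol{g}_n-\sum_m\lambda_m^*\boldsymbol{g}_m^H\boldsymbol{g}_m$, and then asserts that \emph{due to the randomness of the channel}, the maximum eigenvalue $\delta^{\max}$ of $\boldsymbol{\Delta}$ is simple with probability~$1$. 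From there $\xi^*<\delta^{\max}$ violates $\boldsymbol{Y}^*\succeq\boldsymbol{0}$, $\xi^*>\delta^{\max}$ makes $\boldsymbol{Y}^*$ full rank and forces $\boldsymbol{W}_{\!\!k}^*=\boldsymbol{0}$ (impossible for $k>0$), so $\xi^*=\delta^{\max}$ and $\rank\boldsymbol{Y}^*=N_\text{t}-1$, whence $\rank\boldsymbol{W}_{\!\!k}^*\le 1$ by Sylvester.

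In other words, the proposition is really an almost-sure statement over the fading distribution, not a deterministic one --- and your own observation that the underlying multicast problem is NP-hard with a non-tight SDR shows why a purely deterministic argument cannot work. Replace your speculative ordering/Cauchy--Schwarz paragraph with the genericity claim that a Hermitian matrix built from continuously distributed channel vectors has a simple top eigenvalue almost surely (the multipliers are themselves measurable functions of the channel, so the claim still holds), and your proof goes through along the paper's lines.
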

\begin{proof}
	Please refer to Appendix~\ref{Appendix:TheoremProof}.
\end{proof}
Next, we denote by $k^*$ the maximum number of rectifiers $k$, for which problem (\ref{Eqn:MimoSuboptimalFeasibility}) is feasible.
Note that if (\ref{Eqn:MimoSuboptimalFeasibility}) is not feasible for any $k > 0$, we have $k^* = 0$.
Then, we define the convex subset $\mathcal{W}^*$ that corresponds to the case, where the $k^*$ rectifiers with the best channel conditions are driven into saturation. 
This set is given by $\mathcal{W}^* = \big\{\boldsymbol{W} \, | \, (-1)^{u^*_n} (\boldsymbol{g}_{p^n} \boldsymbol{W} \boldsymbol{g}_{p^n}^H - A_s^2) \geq 0, \boldsymbol{u}^* = [\boldsymbol{0}_{k^*}^\top \boldsymbol{1}_{N_\text{e}-{k^*}}^\top]^\top, n \in \{1,2,..,N_\text{e}\}, \boldsymbol{W}\noindent\in\noindent\mathcal{S}_{+} \big\}$, where $\boldsymbol{u}^* = [\boldsymbol{0}_{k^*}^\top \boldsymbol{1}_{N_\text{e}-k^*}^\top]^\top$.
Next, we reformulate problem (\ref{Eqn:MimoSuboptimalFunctionRef}) as follows:
\begin{equation}
	\maximize_{\boldsymbol{W} \in \mathcal{W}^*} \;  \Psi(\boldsymbol{W}) \quad
	\subjectto \; (\text{\upshape\ref{Eqn:MimoSuboptimalFeasibility_C2b}}).
	\label{Eqn:MimoSuboptimalConvexRef}
\end{equation}
Optimization problem (\ref{Eqn:MimoSuboptimalConvexRef}) is still non-convex due to the non-concavity of the objective function. 
Therefore, in the following, we propose to solve (\ref{Eqn:MimoSuboptimalConvexRef}) exploiting \gls*{sca} \cite{Sun2017}.
To this end, we construct an underestimate of the objective function $\Psi(\boldsymbol{W})$, which is convex in $\mathcal{W}^*$, as follows:
\begin{equation}
{\Psi}(\boldsymbol{W}) \geq \hat{\Psi}(\boldsymbol{W}, \boldsymbol{W}^{(t)}),
\end{equation}
\noindent where $\boldsymbol{W}^{(t)}$ is the solution obtained in iteration $t$ of the algorithm, $\hat{\Psi}(\boldsymbol{W}, \boldsymbol{W}^{(t)}) = \Psi(\boldsymbol{W}^{(t)}) + \Tr{ \triangledown{\Psi}(\boldsymbol{W}^{(t)})^H (\boldsymbol{W} - \boldsymbol{W}^{(t)} ) }$, and $\triangledown{\Psi}(\boldsymbol{W}^{(t)})$ is the gradient of $\Psi(\cdot)$ evaluated at $\boldsymbol{W}^{(t)}$ \cite{Ghanem2020a}.
Thus, in each iteration $t$ of the proposed algorithm, we solve the following optimization problem:
\begin{equation}
\begin{aligned}
\boldsymbol{W}^{(t+1)} = \argmax_{\boldsymbol{W} \in {\mathcal{W}^*}  } \, \hat{\Psi}(\boldsymbol{W}, \boldsymbol{W}^{(t)})  \;
\subjectto \;  (\text{\upshape\ref{Eqn:MimoSuboptimalFeasibility_C2b}}).	
\end{aligned}	
\label{Eqn:MimoSuboptimalFunctionAlg}
\end{equation}
\noindent We note that (\ref{Eqn:MimoSuboptimalFunctionAlg}) is a feasible convex optimization problem and can be solved with standard numerical optimization tools, such as CVX \cite{Grant2015}.
Furthermore, it can be shown that the solution of (\ref{Eqn:MimoSuboptimalFunctionAlg}) yields a matrix, whose rank is equal to one.
The corresponding proof is similar to the one in Appendix~\ref{Appendix:TheoremProof}.
Hence, we obtain the beamforming vector $\boldsymbol{w}^*$ as the dominant eigenvector of $\boldsymbol{W}^*$ and compute the corresponding value of function $\Phi(\nu)~=~\psi(\boldsymbol{w}^*)$.
The proposed algorithm is summarized in Algorithm~\ref{OptimizationAlgorithmFunction}.
We note that the proposed algorithm converges to a stationary point of (\ref{Eqn:MimoSuboptimalFunctionRef}) \cite{Lanckriet2009}.
The computational complexity of a single iteration of the algorithm is given by\footnotemark $\mathcal{O}(N_\text{e} N_\text{t}^{\frac{7}{2}} + N_\text{e}^2 N_\text{t}^{\frac{5}{2}} +\sqrt{N_\text{t}} N_\text{e}^3)$, where $\mathcal{O}(\cdot)$ is the big-O notation.
\footnotetext{The computational complexity of a convex semidefinite problem that involves an $n \times n$ positive semidefinite matrix and $m$ constraints is given by $\mathcal{O}\big(\sqrt{n} (m n^3 + m^2 n^2 + m^3)\big)$ \cite{Polik2010}. Here, $n = N_\text{t}$ and $m = N_\text{e}+1$.}

\begin{algorithm}[!t]				
	\SetAlgoNoLine%
	\SetKwFor{Foreach}{for each}{do}{end}		
	Initialize: Transmit power $\nu$, tolerance error $\epsilon_\text{SCA}$.	\\	
	1. Sort the channel gain vectors by their norms $\norm{\boldsymbol{g}_{p^1}} \geq \norm{\boldsymbol{g}_{p^2}} \geq ... \geq \norm{\boldsymbol{g}_{p^{N_\text{e}}}}$, where $p^k \in \{1,2,...,N_\text{e}\}$ and $k = 1,2,...,N_\text{e}$.\\
	2. Set iteration index $j = 1$ and initial value $k^* = 0$\\
	\For{$j = 1$ {\upshape to} $N_\text{e}$}{		
		3. Solve optimization problem (\ref{Eqn:MimoSuboptimalFeasibility}) for $k = j$ and store $k^* = j$ if the problem is feasible \\				
	}
	4.	Determine $\mathcal{W}^*$, set initial values $h^{(0)} = 0$, $t = 0$, and randomly initialize $\boldsymbol{W}^{(0)}$  \\
		\Repeat{$|h^{(t)}-h^{(t-1)}|\leq \epsilon_\text{\upshape{SCA}}$ }{		
			a. For given $\boldsymbol{W}^{(t)}$, obtain $\boldsymbol{W}^{(t+1)}$ as solution of (\ref{Eqn:MimoSuboptimalFunctionAlg}) \\								
			b. Evaluate $h^{(t+1)} = {\Psi}(\boldsymbol{W}^{(t+1)})$\\
			c. Set $t = t+1$\\ 
		}	
	5. Obtain $\boldsymbol{w}^*$ as the dominant eigenvector of $\boldsymbol{W}^{(t)}$ and evaluate $\Phi(\nu) = \psi(\boldsymbol{w}^*)$\\
	\textbf{Output:} 
	$ \Phi(\nu)$, $\boldsymbol{w}^*$
	\caption{\strut Suboptimal algorithm for solving optimization problem (\ref{Eqn:MimoPropositionFunction}) }
	\label{OptimizationAlgorithmFunction}
\end{algorithm}	
			\subsubsection{Grid Search Method}				
			\label{Section:GridSearch}
			In the following, we propose a grid-search based method for solving the min-max optimization problem in Lemma~\ref{Theorem:Corollary2}.
We note that this problem is not convex since function $s(\nu_1, \nu_2)$ is neither convex nor concave in $\nu_1$ and $\nu_2$, respectively. 
However, since the dimensionality of the problem is low, performing a grid search to determine $\nu_1^*$ and $\nu_2^*$ has a limited affordable complexity \cite{Coope2001}.
To this end, we define a uniform grid $\mathcal{P} = \{\rho_0, \rho_1, \rho_2, ..., \rho_{N_\rho}\}$, where $\rho_0 = 0$, $\rho_j = \Delta_\rho + \rho_{j-1}$, $j = 1,2,...,{N_\rho}$, and $\Delta_\rho$ is a predefined step size.
$N_\rho$ is the grid size that is chosen sufficiently large in our simulations to ensure that the function $\psi(\cdot)$ saturates, i.e., all the rectifiers are driven into saturation for $\nu = \rho_{N_\rho}$, cf. (\ref{Eqn:RaniaModel}).
Then, we define the smallest element of $\mathcal{P}$, which is larger than $P_x$ as $\rho_n$, i.e., $\rho_n = \min\{\rho_j | \rho_j \geq P_x, j = 0,1,...,N_\rho\}$.
Next, we define a matrix $\boldsymbol{S} \in \mathbb{R}^{n\times (N_\rho-n+1)}$, whose elements are the values of function $s(\cdot, \cdot)$ evaluated at the elements of $\mathcal{P}$, i.e., ${S}_{i,j} = s(\rho_i, \rho_{j'})$, $i = 0,1,...,{n-1}$, $j = {j'}-n$, and ${j'} = n,n+1,...,N_\rho$.
Finally, we obtain the power values $\nu_1^* = \rho_{i^*}$ and $\nu_2^*=\rho_{n+j^*}$, where $i^* = \argmin_{i} \max_{j} {S}_{i,j}$ and $j^* = \argmax_{j} {S}_{i^*,j}$, and the corresponding beamforming vectors $\boldsymbol{w}_1^*$ and $\boldsymbol{w}_2^*$.
The proposed scheme is summarized in Algorithm~\ref{AlgorithmGridSearch}.
The computational complexity of the proposed scheme is quadratic with respect to the grid size $N_\rho$ and does not depend on the numbers of antennas $N_\text{t}$ and $N_\text{e}$.
\begin{algorithm}[!t]				
	\SetAlgoNoLine%
	\SetKwFor{Foreach}{for each}{do}{end}		
	Initialize: Grid size $N_{\rho}$, step size $\Delta_{\rho}$, maximum TX power $P_x$, initial value $\rho_0=0$.	\\	
	1. Compute the grid $\mathcal{P}$ and the values of $\Phi(\cdot)$ for the grid elements:\\
	\For{$m = 0$ {\normalfont to} $N_{\rho}$}{
		1.1. Compute $\Phi_{m}$ = $\Phi(\rho_{m})$ and $\boldsymbol{v}_m = \boldsymbol{w}^*$ with Algorithm~\ref{OptimizationAlgorithmFunction}\\
		1.2. Set $\rho_{m+1} = \rho_{m} + \Delta_{\rho}$ \\
	}
	2. Determine the grid element $\rho_n = \min\{\rho_j | \rho_j \geq P_x, j = 0,1,...,N_\rho \}$ \\
	3. Calculate the elements of matrix $\boldsymbol{S}$ as ${S}_{i,j} = s(\rho_i, \rho_{j'}) = \frac{\Phi_{j'} - \Phi_i}{\rho_{j'} - \rho_i}$, $i = 0,1,...,{n-1}$, $j = {j'}-n$, and ${j'} = n,n+1,...,N_\rho$ \\
	4. Determine the power values and the corresponding vectors $\nu_1^* = \rho_{i^*}, \boldsymbol{w}_1^* = \boldsymbol{v}_{i^*}$ and $\nu_2^*=\rho_{n+j^*}, \boldsymbol{w}_2^* = \boldsymbol{v}_{n+j^*}$, where $i^* = \argmin_{i} \max_{j} {S}_{i,j}$ and $j^* = \argmax_{j} {S}_{i^*,j}$ \\
	\textbf{Output:} $p^*_{\boldsymbol{w}}(\boldsymbol{w}) = \beta \delta(\boldsymbol{w} - \boldsymbol{w}^*_1) + (1-\beta) \delta(\boldsymbol{w} - \boldsymbol{w}^*_2)$ with $\beta = \frac{\nu_2^* - P_x}{\nu_2^* - \nu_1^*}$
	\caption{\strut Grid search for determining the pdf $p^*_{\boldsymbol{w}}(\boldsymbol{w})$ }
	\label{AlgorithmGridSearch}
\end{algorithm}	 
				
	\section{Numerical Results}
	In this section, we evaluate the performance of the proposed transmit strategies via simulations.
In our simulations, the path losses are calculated as $35.3+37.6\log_{10}(d)$, where $d = \SI{10}{\meter}$ is the distance between the TX and the EH node \cite{Ghanem2020a}.
Furthermore, in order to harvest a meaningful amount of power, we assume that the TX and EH node have a line-of-sight link.
Thus, the channel gains $\boldsymbol{g}_p$ follow Rician distributions with Rician factor $1$.
For the EH model $\phi(\cdot)$ in (\ref{Eqn:RaniaModel}), we adopt the following values of the parameters {$a=1.29$, $B = 1.55\cdot10^3$, $I_s = \SI{5}{\micro\ampere}$, $R_L = \SI{10}{\kilo\ohm}$, and $A_s^2 = \SI{25}{\micro\watt} $}, respectively.
For Algorithms~\ref{OptimizationAlgorithmFunction} and \ref{AlgorithmGridSearch}, we adopt step size $\Delta_{\rho} = 0.1$, grid size $N_{\rho} = 10^3$, and tolerance error $\epsilon_{\text{SCA}} = 10^{-3}$.
All simulation results were averaged over $100$ channel realizations.

In Fig.~\ref{Fig:Results_SU}, we investigate the performance of various WPT systems for different values of the power budget $P_x$.
The considered \gls*{miso} and \gls*{simo} WPT systems employ $N_\text{t} = 2$ and $N_\text{e} = 2$ antennas at the TX and the EH, respectively, whereas for the MIMO WPT system, both the TX and EH have $N_\text{t} = N_\text{e} = 2$ antennas.
For these systems, exploiting Algorithms~\ref{OptimizationAlgorithmFunction} and \ref{AlgorithmGridSearch}, we determine the pdfs $p^*_{\boldsymbol{x}}(\boldsymbol{x})$ and compute the corresponding average harvested powers $\overline{\Phi}(\boldsymbol{x}; p^*_{\boldsymbol{x}})$.
For comparison, we also consider a SISO WPT system employing the optimal transmit strategy in \cite{Morsi2019}.
As Baseline Scheme 1, we consider a MIMO WPT system with energy beamforming at the TX, which is optimal for linear EHs \cite{Zhang2013}.
Furthermore, as Baseline Scheme 2, similar to \cite{Shen2020}, we consider a MIMO WPT system, where a scalar input symbol and a single beamforming vector are adopted at the TX. 
For this system, we obtain the optimal beamforming vector $\boldsymbol{w}^*$ as solution of (\ref{Eqn:MimoPropositionBeamformerProblem}) for $\nu = P_x$ with Algorithm~\ref{OptimizationAlgorithmFunction} and compute the corresponding harvested power as $\overline{\Phi} = \Phi(\boldsymbol{w}^*)$.

In Fig.~\ref{Fig:Results_SU}, we observe that for each considered WPT setup, the average harvested power $\overline{\Phi}(\cdot)$ is bounded above for sufficiently large values of $P_x$, i.e., all rectifiers of the EH node are driven into saturation, cf. (\ref{Eqn:RaniaModel}).
Furthermore, the saturation level of the harvested power is proportional to the number of rectennas employed at the EH.
As expected, the MIMO WPT system achieves a superior performance compared to the SIMO and MISO WPT systems, which, in turn, outperform the SISO WPT system  significantly.
Interestingly, the MISO WPT system can harvest more power than the SIMO WPT system if the transmit power budget at the TX is low, whereas, for large values of $P_x$, the single rectenna of the MISO WPT system is driven into saturation and, thus, more power can be harvested by the SIMO WPT system.
Finally, we observe that for the MIMO WPT system, the proposed optimal transmit strategy, which employs two beamforming vectors, outperforms Baseline Schemes 1 and 2, which employ a single beamforming vector.
However, we note that Baseline Scheme 2, where the TX is equipped with the optimal beamforming vector outperforms Baseline Scheme 1, where energy beamforming is adopted \cite{Zhang2013}.
\begin{figure}[!t]
	\centering
	\includegraphics[width=0.45\textwidth, draft=false]{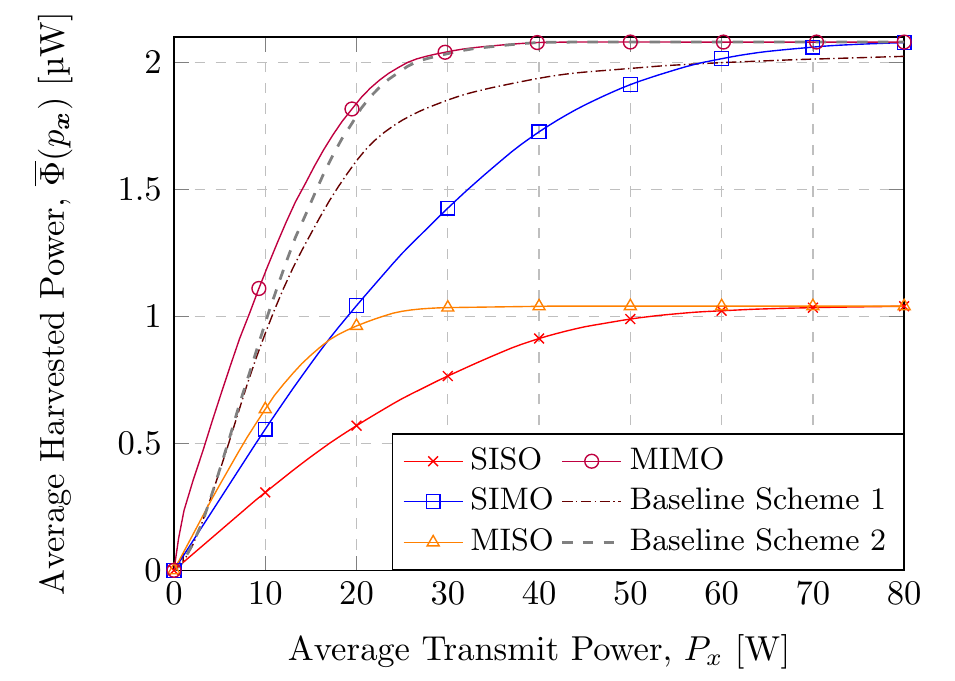}
	\caption{Comparison of SISO, SIMO, MISO, and MIMO WPT systems.}
	\label{Fig:Results_SU}
	\vspace*{-10pt}
\end{figure}
\begin{figure*}[!t]
	\centering
	\subfigure[Comparison for different numbers of EH antennas, $N_{\text{e}}$
	]{
		\includegraphics[width=0.45\textwidth, draft=false]{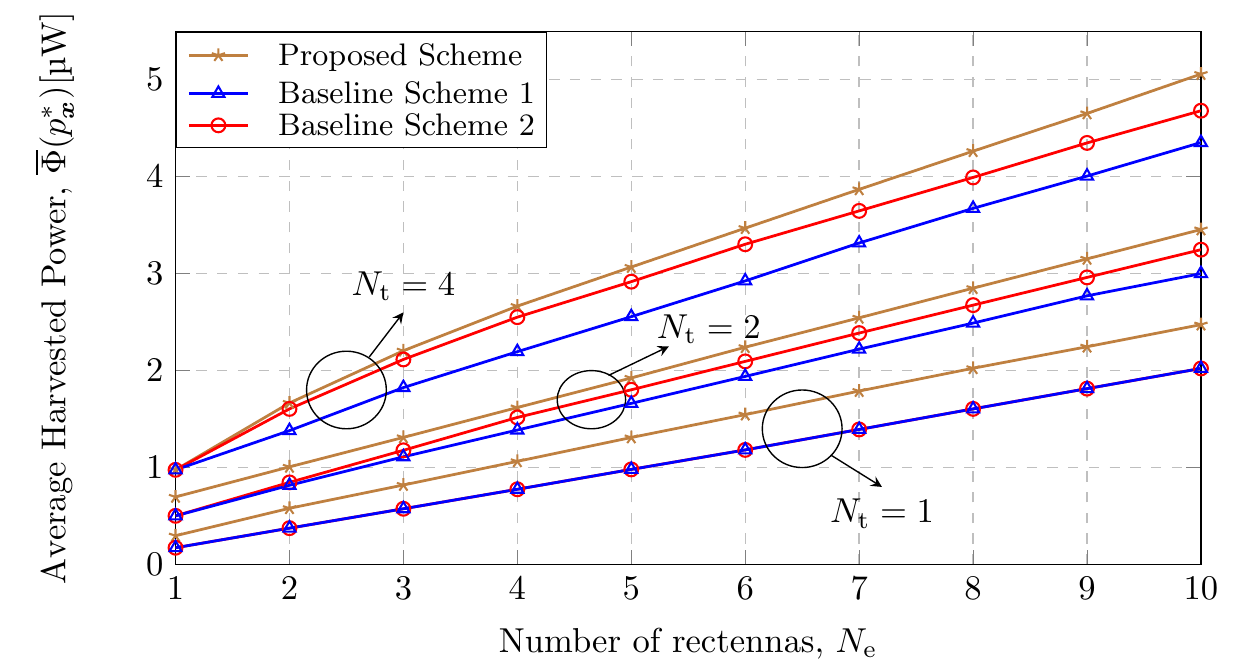} \label{Fig:Results_MU_Neh}}
	\quad
	\subfigure[Comparison for different numbers of TX antennas, $N_{\text{t}}$
	]{
		\includegraphics[width=0.45\textwidth, draft=false]{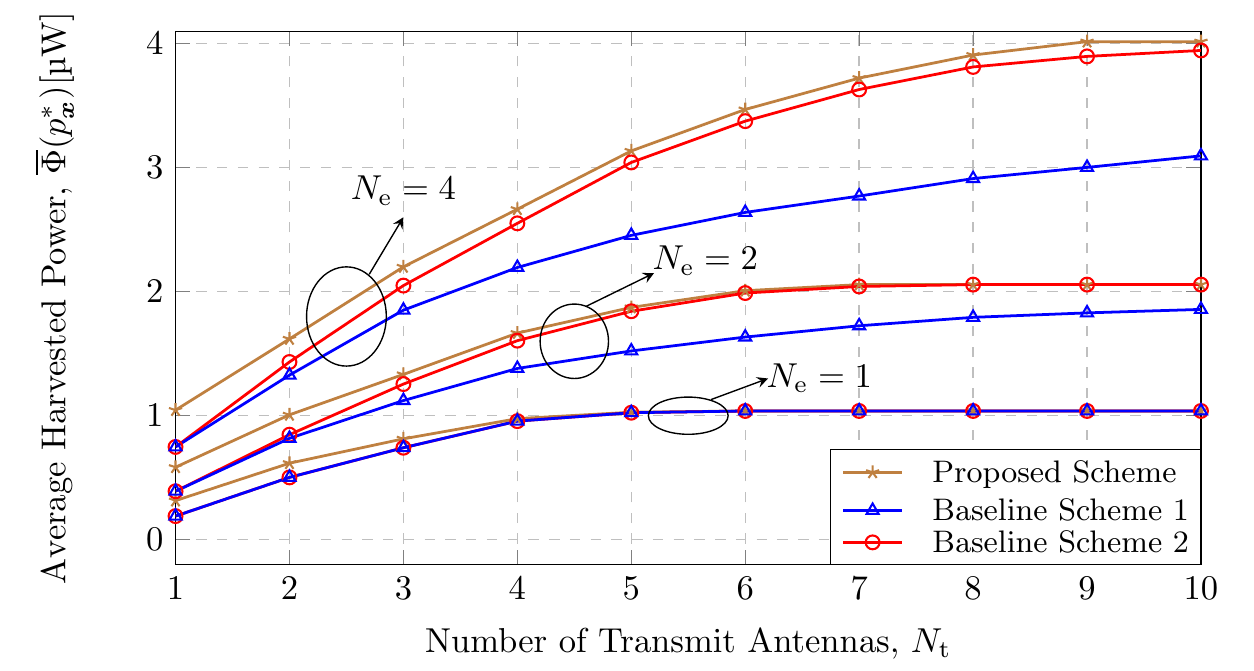} \label{Fig:Results_MU_Nt}}
	\caption{Average harvested power ${\overline{\Phi}(\boldsymbol{x}; p^*_{\boldsymbol{x}})}$ for different numbers of antennas $N_\text{e}$ and $N_\text{t}$.}
	\label{Fig:Results_MU}
	\vspace*{-10pt}
\end{figure*}

In Fig.~\ref{Fig:Results_MU_Neh} and \ref{Fig:Results_MU_Nt}, we compare the average harvested powers for different numbers of antennas $N_\text{e}$ and $N_\text{t}$, respectively, assuming a transmit power budget of $P_x = \SI{10}{\watt}$.
For each system setup, we compare the performance of the proposed transmit strategy with Baseline Schemes 1 and 2.
First, we observe that higher values of $N_\text{e}$ and $N_\text{t}$ yield larger average harvested powers $\overline{\Phi}$.
Moreover, we note that for SIMO WPT systems, i.e., for $N_\text{t} = 1$, the transmit strategies for Baseline Scheme 1 and Baseline Scheme 2 are identical and depend only on the power budget $P_x$.
In Fig.~\ref{Fig:Results_MU_Neh}, we observe that the harvested power depends practically linearly on the number of antennas at the EH node even if the rectifiers are not driven into saturation.
However, in Fig.~\ref{Fig:Results_MU_Nt}, we observe that for large numbers of transmit antennas, the average harvested power saturates since the harvested power in (\ref{Eqn:RaniaModel}) is bounded above. 
Furthermore, we observe that the number of transmit antennas, which is required to drive the rectennas into saturation, grows with $N_\text{e}$.
In fact, larger numbers of antennas $N_\text{t}$ enable a more efficient allocation of the power available at the TX, which, in turn, yields higher received powers at the EH node.

	\section{Conclusion}
	In this paper, we considered MIMO WPT systems with multiple non-linear rectennas at the EH node. 
We formulated an optimization problem for the maximization of the average harvested power subject to the power budget at the TX.
We showed that the optimal transmit strategy employs scalar unit-norm input symbols with an arbitrary phase and two beamforming vectors.
In order to obtain these vectors, we proposed a non-convex optimization problem and a corresponding iterative algorithm based on a two-dimensional grid search, \gls*{sdr}, and \gls*{sca} to solve it.
Our simulation results revealed that the proposed MIMO WPT design outperforms two baseline schemes based on a single beamforming vector and a linear EH model, respectively.
Furthermore, we observed that the average harvested power grows linearly with the number of antennas at the EH and saturates for a large number of antennas equipped at the TX.
			
	\appendices
	\begin{appendices}
		\renewcommand{\thesection}{\Alph{section}}
		\renewcommand{\thesubsection}{\thesection.\arabic{subsection}}
		\renewcommand{\thesectiondis}[2]{\Alph{section}:}
		\renewcommand{\thesubsectiondis}{\thesection.\arabic{subsection}:}	
		\section{Proof of Proposition 1}
		\label{Appendix:Prop3Proof}
			First, we note that for any arbitrary transmit symbol $\tilde{\boldsymbol{x}}$, there is a symbol $\hat{\boldsymbol{x}}$ given by
	\begin{equation}
		\hat{\boldsymbol{x}} = \argmax_{\boldsymbol{x}} \psi(\boldsymbol{x}) \; \subjectto \norm{\boldsymbol{x}}^2 = \norm{\tilde{\boldsymbol{x}}}^2,
	\end{equation}  
	which has the same transmit power and yields a higher or equal value of $\psi(\boldsymbol{x})$.
	Hence, for any arbitrary distribution of transmit symbols with a point of increase at $\tilde{\boldsymbol{x}}$, a larger value of $\Psi(\boldsymbol{x})$ can be obtained by removing this point and increasing the probability of $\hat{\boldsymbol{x}}$ by the corresponding value.
	
	Let us introduce now the monotonically non-decreasing function $\Phi(\nu)$ in (\ref{Eqn:MimoPropositionFunction}) that returns the largest possible value of $\psi(\boldsymbol{x})$ if a symbol with power $\nu$ was transmitted. 
	Then, the solution of (\ref{Eqn:WPT_GeneralProblem}) can be obtained by determining first the solution $p^*_\nu(\nu)$ of the following optimization problem:
	\begin{equation}
		\maximize_{p_{\nu}} \; \mathbb{E}_\nu\{\Phi(\nu)\}\; \quad \subjectto \; \mathbb{E}_\nu\{\nu\} \leq P_x.
		\label{Eqn:MimoPropositionProofProblem}
	\end{equation}
	Since (\ref{Eqn:MimoPropositionProofProblem}) is in the form of (\ref{Eqn:GeneralOptimizationProblem}), $p^*_\nu(\nu)$ is a discrete pdf consisting of two mass points, $\nu^*_1$ and $\nu^*_2$, see Lemma~\ref{Theorem:Corollary2}.
	Hence, the optimal symbol vectors $\boldsymbol{x}$ can be decomposed as $\boldsymbol{x} = \boldsymbol{w} s$ with unit-norm symbols $s$, where the pdf of discrete random beamforming vector $\boldsymbol{w}$ also consists of two mass points evaluated as 
	\begin{equation}
	 {\boldsymbol{w}^*_n} = \argmax_{\boldsymbol{w}: \, \norm{\boldsymbol{w}}^2 = \nu^*_n} \psi(\boldsymbol{w}), n\in\{1,2\},
	\end{equation}	
	with probabilities $p^*_{\boldsymbol{w}}(\boldsymbol{w}^*_n) = p^*_{\nu}(\nu^*_n), n\in\{1,2\}$, respectively.
	Finally, applying Lemma~\ref{Theorem:Corollary2} to optimization problem (\ref{Eqn:MimoPropositionProofProblem}), yields Proposition~\ref{Theorem:Proposition3}.
	This concludes the proof.
		
		\section{Proof of Proposition 2}
		\label{Appendix:TheoremProof}
		First, we rewrite convex problem (\ref{Eqn:MimoSuboptimalFeasibility}) equivalently as follows:
\begin{subequations}
	\begin{align}
	\maximize_{\boldsymbol{W}} \quad &1\\
	\subjectto \quad & \boldsymbol{g}_m \boldsymbol{W} \boldsymbol{g}_m^H \leq A_s^2, \; \forall m \in \mathcal{M} \label{Eqn:RankProofProblemLambdaConstr}\\
	& \boldsymbol{g}_n \boldsymbol{W} \boldsymbol{g}_n^H \geq A_s^2, \; \forall n \in \mathcal{N} \label{Eqn:RankProofProblemMuConstr}\\
	& \Tr{\boldsymbol{W}} \leq \nu \label{Eqn:RankProofProblemEpsConstr}\\
	& \boldsymbol{W} \in \mathcal{S}_{+} \label{Eqn:RankProofProblemYConstr},
	\end{align}
	\label{Eqn:RankProofProblem}
\end{subequations}
\noindent\hspace*{-4pt}where $\mathcal{N} = \{p^1, p^2,.., p^k\}$ and $\mathcal{M} = \{p^{k+1}, p^{k+2}, .., p^{N_\text{e}}\}$ contain the indices of the rectifiers that are and are not driven into saturation, respectively.
Since (\ref{Eqn:RankProofProblem}) is feasible and $k > 0$, set $\mathcal{N}$ is not empty and $\nu > 0$.
Furthermore, the strong duality holds and the gap between (\ref{Eqn:RankProofProblem}) and its dual problem is equal to zero \cite{Boyd2004}.
The Lagrangian of (\ref{Eqn:RankProofProblem}) is given by:
\begin{align}
	\mathcal{L} = & \sum_m \lambda_m \boldsymbol{g}_m \boldsymbol{W} \boldsymbol{g}_m^H - \sum_n \mu_n \boldsymbol{g}_n \boldsymbol{W} \boldsymbol{g}_n^H \nonumber \\ & + \xi \Tr{\boldsymbol{W}} - \Tr{\boldsymbol{Y}\boldsymbol{W}} + {\gamma},
	\label{Eqn:RankProofLagrangian}
\end{align}
\noindent where ${\lambda_m}, m\in\mathcal{M}$, ${\mu_n},  n\in\mathcal{N}$, $\xi$, and $\boldsymbol{Y}$ are the Lagrangian multipliers associated with constraints (\ref{Eqn:RankProofProblemLambdaConstr}), (\ref{Eqn:RankProofProblemMuConstr}), (\ref{Eqn:RankProofProblemEpsConstr}), and (\ref{Eqn:RankProofProblemYConstr}), respectively, and $\gamma$ accounts for all terms that do not involve $\boldsymbol{W}$.
Next, we note that the Karush-Kuhn-Tucker (KKT) conditions are satisfied for the optimal solution of (\ref{Eqn:RankProofProblem}) denoted by $\boldsymbol{W}_{\!\!k}^*$ and the solutions  $\boldsymbol{\lambda}^*$, $\boldsymbol{\mu}^*$, $\xi^*$, and $\boldsymbol{Y}^*$ of the corresponding dual problem.
These conditions are given by
\begin{subequations}
	\begin{align}
	&\triangledown \mathcal{L} = 0 
	\label{Eqn:RankProofKKTGradient}\\
	& \lambda_m^* \geq 0, \mu_n^* \geq 0, \xi^* \geq 0, \boldsymbol{Y}^* \succeq 0, \; \forall m \in \mathcal{M}, \, \forall n \in \mathcal{N} 
	\label{Eqn:RankProofKKTPositive}\\
	& \boldsymbol{Y}^* \boldsymbol{W}_{\!\!k}^* = \boldsymbol{0}_{N_\text{t} \times N_\text{t}},
	\label{Eqn:RankProofKKTSemidef}
	\end{align}
	\label{Eqn:RankProofKKT}
\end{subequations}
\noindent\hspace*{-4pt}where $\boldsymbol{0}_{N \times N}$ denotes the square all-zero matrix of size $N$.
Next, we express condition (\ref{Eqn:RankProofKKTGradient}) as follows:
\begin{equation}
	\boldsymbol{Y}^* = \xi^* \boldsymbol{I} - \boldsymbol{\Delta}, \label{Eqn:RankProofMainEquality} 
\end{equation}
\noindent where $\boldsymbol{\Delta} = \sum_n \mu_n^* \boldsymbol{g}_n^H \boldsymbol{g}_n - \sum_m \lambda_m^* \boldsymbol{g}_m^H \boldsymbol{g}_m$ and $\boldsymbol{I}$ denotes the identity matrix.
Let us now investigate the structure of $\boldsymbol{\Delta}$.
We denote the maximum eigenvalue of $\boldsymbol{\Delta}$ by $\delta^\text{max} \in \mathbb{R}$.
Due to the randomness of the channel, with probability $1$, only one eigenvalue of $\boldsymbol{\Delta}$ has value $\delta^\text{max}$.
Observing (\ref{Eqn:RankProofMainEquality}), we note that if $\xi^* > \delta^\text{max}$, then $\boldsymbol{Y}^*$ is a positive semidefinite matrix with full rank.
In this case, (\ref{Eqn:RankProofKKTSemidef}) yields $\boldsymbol{W}_{\!\!k}^* = \boldsymbol{0}_{N_\text{t} \times N_\text{t}}$ and, hence, $\rank \boldsymbol{W}_{\!\!k}^* = 0$, which is a feasible solution of (\ref{Eqn:RankProofProblem}) if and only if $k = 0$.
Furthermore, if $\xi^* < \delta^\text{max}$, then $\boldsymbol{Y}^*$ is not a positive semidefinite matrix, which contradicts (\ref{Eqn:RankProofKKTPositive}).
Hence, for $k > 0$, $\xi^* = \delta^\text{max} \geq 0$ and $\boldsymbol{Y}^*$ is a positive semidefinite matrix with $\rank \{\boldsymbol{Y}^*\} = N_\text{t} - 1$.
Then, applying Sylvester's rank inequality to (\ref{Eqn:RankProofKKTSemidef}), we have
\begin{align}
	0 = \rank \{\boldsymbol{Y}^* \boldsymbol{W}_{\!\!k}^*\} \geq \rank \{\boldsymbol{Y}^*\} &+ \rank \{\boldsymbol{W}_{\!\!k}^*\} - N_\text{t} \nonumber \\ &= \rank \{\boldsymbol{W}_{\!\!k}^*\} - 1.
\end{align}
Finally, since there exists a feasible solution of (\ref{Eqn:MimoSuboptimalFeasibility}) and $k > 0$, $\rank \{\boldsymbol{W}_{\!\!k}^*\} = 1$. This concludes the proof.

	\end{appendices}

	\bibliographystyle{IEEEtran}
	\bibliography{Final}

\end{document}